\algnewcommand\algorithmicforeach{\textbf{for each}}
\newcommand{\funcname}{\ensuremath{\mathsf{\projenum}}}
\newcommand{\cnt}{\ensuremath{\mathsf{cnt}}}
\newcommand{\Card}[1]{|#1|}
\newcommand{\comp}[1]{\mathsf{Components}(#1)}
\newcommand{\projenu}[1]{\mathsf{ProjMinModels}(#1)}
\newcommand{\var}[1]{\mathsf{Var}(#1)}
\newcommand{\blocked}{\mathcal{B}}
\newcommand{\mmodel}[1]{\mathsf{MinModels}(#1)}
\newcommand{\blockedmmodel}[1]{\mathsf{MinModelswithBlocking}(#1)}
\newcommand{\rules}[1]{\mathsf{Rules}(#1)}
\newcommand{\body}[1]{\mathsf{Body}(#1)}
\newcommand{\at}[1]{\mathsf{atoms}(#1)}
\newcommand{\head}[1]{\mathsf{Head}(#1)}
\newcommand{\answer}[1]{\mathsf{AS}(#1)}
\newcommand{\mingen}[1]{\mathsf{MG}(#1)}
\newcommand{\true}{\ensuremath{\mathsf{true}}\xspace}
\newcommand{\false}{\ensuremath{\mathsf{false}}\xspace}
\newcommand{\dlp}[1]{\mathcal{DLP}(#1)}
\newcommand{\clingo}{clingo\xspace}
\newcommand{\projenum}{\ensuremath{\mathsf{Proj\text{-}Enum}}\xspace}
\newcommand{\approxasp}{ApproxASP\xspace}
\newcommand{\hashcounter}{\ensuremath{\mathsf{HashCount}}\xspace}
\newcommand{\items}{\ensuremath{\mathcal{I}}}
\newcommand{\cover}[1]{\ensuremath{\mathcal{C}(#1)}}
\newcommand{\qes}[1]{\mathsf{TQP}(#1)}
\newcommand{\minc}{\mathsf{C_{min}}}
\newcommand{\approxfunction}{\ensuremath{\mathsf{\hashcounter}}}
\newcommand{\lst}{\mathsf{m^{\star}}}
\newcommand{\pr}[1]{\mathsf{Pr}[#1]}
\newcommand{\bigcell}{\mathsf{hasMinModels}}
\newcommand{\loindex}{\mathsf{loIndex}}
\newcommand{\hiindex}{\mathsf{hiIndex}}
\newcommand{\sstar}{\ensuremath{s^{\star}}}
\newcommand{\lastMMFound}{\mathsf{\hat{m}}}
\newcommand{\Break}{\textbf{break} }
\newcommand{\tint}{\ensuremath{\mathcal{T}}}
\newcommand{\cut}{\ensuremath{\mathcal{C}}}
\newcommand{\support}{\ensuremath{\mathcal{X}}}
\newcommand{\timeout}{\ensuremath{\mathsf{Timeout}} }
\newcommand{\error}{\ensuremath{\mathsf{Error}}}
\newcommand{\p}{\mathsf{P}}
\newcommand{\np}{\mathsf{NP}}
\newcommand{\co}{\mathsf{co}}
\newcommand{\mincount}{\ensuremath{\#\mathsf{MinModels}}\xspace}
\newcommand{\hybrid}{\ensuremath{\mathsf{MinLB}}\xspace}
\newcommand{\computecut}[1]{\ensuremath{\mathsf{Cut}(#1)}}
\newcommand{\is}[1]{\ensuremath{\mathsf{IndependentSupport}(#1)}}
\newtheorem{lemma}{Lemma}
\newtheorem{example}{Example}
\newtheorem{theorem}{Theorem}
\begin{document}

\lefttitle{Kabir and Meel}

\jnlPage{1}{8}
\jnlDoiYr{2021}
\doival{10.1017/xxxxx}

\title[Lower Bounding Minimal Model Count]{On Lower Bounding Minimal Model Count}

\begin{authgrp}
\author{Mohimenul Kabir}
\affiliation{National University of Singapore}
\author{Kuldeep S Meel}
\affiliation{University of Toronto}
\end{authgrp}

\history{\sub{xx xx xxxx;} \rev{xx xx xxxx;} \acc{xx xx xxxx}}

\maketitle

\begin{abstract}
Minimal models of a Boolean formula play a pivotal role in various reasoning tasks. 
While previous research has primarily focused on qualitative analysis over minimal models; our study concentrates on the quantitative aspect, specifically counting of minimal models.
Exact counting of minimal models is strictly harder than $\#\p$, prompting our investigation into establishing a lower bound for their quantity, which is often useful in related applications.
In this paper, we introduce two novel techniques for counting minimal models, leveraging the expressive power of answer set programming: the first technique employs methods from knowledge compilation, while the second one draws on recent advancements in hashing-based approximate model counting. 
Through empirical evaluations, we demonstrate that our methods significantly improve the lower bound estimates of the number of minimal models, surpassing the performance of existing minimal model reasoning systems in terms of runtime.

\end{abstract}

\begin{keywords}
Minimal model, Propositional Circumscription, Model Counting, ASP
\end{keywords}
\section{Introduction}
Given a propositional formula $F$, a model $\sigma \models F$ is {\em minimal} if $\forall \sigma\textprime \subset \sigma$, it holds that $\sigma\textprime \not \models F$~\citep{ABFL2017}. 
Minimal model reasoning is fundamental to several tasks in artificial intelligence, including circumscription~\citep{Mccarthy1980,Lifschitz1985}, 
default logic~\citep{Reiter1980}, diagnosis~\citep{DMR1992}, and deductive databases under the generalized closed-world assumption~\citep{Minker1982}.
Although not new, minimal model reasoning has been the subject of several studies~\citep{EG1993,BD1996,Rachel2005,KK2003}, 
covering tasks such as {\em minimal model finding} (finding a single minimal model), {\em minimal model checking}
(deciding whether a model is minimal), and {\em minimal model entailment and membership} (deciding whether a literal belongs to all minimal models or some minimal models, respectively)~\citep{RP1997}.

Complexity analysis has established that minimal model reasoning is intractable, tractable only for specific subclasses of CNF (Conjunctive Normal Form) formulas. 
Typically, finding one minimal model for positive CNF formulas\footnote{A CNF formula is positive if each clause has at least one positive literal. Every positive CNF formula has at least one minimal model.} is in $\p^{\np[\mathcal{O}(\log{n})]}$-hard~\citep{Cadoli1992a}.
Additionally, checking whether a model is minimal is $\co$-$\np$-complete~\citep{Cadoli1992b}, whereas queries related to entailment and membership are positioned at the second level of the polynomial hierarchy~\citep{EG1993}.

This study delves into a nuanced reasoning task on minimal models, extending beyond the simplistic binary version of decision-based queries. 
Our focus shifts towards quantitative reasoning with respect to minimal models. Specifically, we aim to count the number of minimal models for a given propositional formula. 
While enumerating a single minimal model is insufficient in many applications, counting the number of minimal models provides a useful metric for related measures~\citep{Thimm2016,HS2008}.
Apart from specific structures of Boolean formulas, exact minimal model counting is $\#\co$-$\np$-complete~\citep{KK2003}, established through {\em subtractive reductions}.

Although minimal models can theoretically be counted by iteratively employing minimal model finding oracles, this approach is practical only for a relatively small number of minimal models and becomes impractical as their number increases.
Advanced model counting techniques have scaled to a vast number of models through sophisticated {\em knowledge compilation} methods~\citep{Darwiche2004,Thurley2006}, 
which involve transforming an input formula into a specific representation that enables efficient model counting based on the size of this new representation. 
However, applying knowledge compilation to minimal model counting presents unique challenges, which is elaborated in~\Cref{sec:estimation_methods}. 
Beyond knowledge compilation, approximate model counting has emerged as a successful strategy for estimating the number of models with probabilistic guarantees~\citep{CMV2013}. 
In particular, the {\em hashing-based} technique, which partitions the search space into {\em smaller, roughly equal} partitions using randomly generated XOR constraints~\citep{GSS2021}, \
has attracted significant attention. The model count can be estimated by enumerating the models within one randomly chosen partition~\citep{CMV2013}.

Our empirical study reveals that both approaches to minimal model counting face scalability issues in practical scenarios. Furthermore, knowing a lower bound of the model count is still useful in many applications and is often computed in the model counting literature~\citep{GSS2007}.  
Some applications require the enumeration of all minimal models~\citep{JSS2016,BCGJK2022}, but complete enumeration becomes infeasible for a large number of minimal models. Here, the lower bound of the number of minimal models provides a useful criterion for assessing the feasibility of enumerating all minimal models. 
Knowing this lower bound of the model count is often beneficial to estimate the size of the search space, which enables more specific targeting within the search space~\citep{FGR2022}. 
Consequently, our research shifts focus from counting all minimal models to determining a lower bound for their number.

The primary contribution of this paper is the development of methods to estimate a lower bound for the number of minimal models of a given propositional formula. 
This is achieved by integrating knowledge compilation and hashing-based techniques with minimal model reasoning, thus facilitating the estimation of lower bounds. 
At the core, the proposed methods conceptualize minimal models of a formula as {\em answer sets} of an ASP program; Answer Set Programming (ASP) is a declarative programming paradigm for knowledge representation and reasoning~\citep{MT1999}.
Additionally, our proposed methods depend on the efficiency of well-engineered ASP systems.
Our approach utilizing knowledge compilation effectively counts the number of minimal models or provides a lower bound. 
Besides, our hashing-based method offers a lower bound with a probabilistic guarantee. 
We apply our minimal model counting method to the domain of itemset mining, showcasing its utility. 
The effectiveness of our proposed methods has been empirically validated on datasets from model counting competitions and itemset mining. 
To assess the performance of our proposed methods, we introduce a new metric that considers both the quality of the lower bound and the computational time;
our methods achieve the best score compared to existing minimal model reasoning systems. 

The paper is organized as follows:~\Cref{sec:preliminaries} presents the background knowledge necessary to understand the main contributions of the paper; 
~\Cref{sec:estimation_methods} outlines our proposed techniques for estimating the lower bound on the number of minimal models; 
\Cref{sec:experiment} demonstrates the experimental evaluation of our proposed techniques; and~\Cref{sec:conclusion} concludes our work with some indications of future research directions.  
Due to space constraints, the related work section, implementation details, theoretical analysis of the proposed methods, and part of the experimental analysis are deferred to the appendix.

\section{Preliminaries}
\label{sec:preliminaries}
Before going to the technical description, we present some background about propositional satisfiability, answer set programming, itemset mining from data mining, and a relationship between minimal models and minimal generations in transaction records. 

\paragraph{Propositional Satisfiability.}
In propositional satisfiability, we define the domain $\{0,1\}$, which is equivalently $\{\mathsf{false}, \mathsf{true}\}$ and 
a \emph{propositional variable} or {\em atom} $v$ takes a value from the domain. A \emph{literal} $\ell$ is either a variable $v$ (positive literal) or its
negation $\neg{v}$ (negative literal).
 A \emph{clause} $C$ is a {\em disjunction} of literals, denoted as $C = \bigvee_{i} \ell_i$. 
A Boolean formula  $F$, in \emph{Conjunctive Normal Form (CNF)}, is a {\em conjunction}
of clauses, represented as $F = \bigwedge_{j} C_j$. We use the notation $\var{F}$ to denote the set of variables within  
$F$.

An assignment $\tau$ over $X$ is a function $\tau: X \rightarrow \{0,1\}$, where $X \subseteq \var{F}$.  For an atom $v \in X$, we define
$\tau(\neg{v}) = 1 - \tau(v)$. The assignment $\tau$ over $\var{F}$ is a {\em model} of $F$ if $\tau$ evaluates $F$ to be \true. 
Given $X \subseteq \var{F}$ and an assignment $\tau$, we use the notation $\tau_{\downarrow X}$ to denote the {\em projection} of $\tau$ onto variable set $X \subseteq \var{F}$.
Given a CNF formula $F$ (as a set of clauses) and an assignment
$\tau: X \rightarrow \{0,1\}$, where $X \subseteq \var{F}$, 
the \textit{unit propagation} of $\tau$ on $F$, denoted
$F|_{\tau}$, is recursively defined as follows:

$F|_{\tau} = \begin{cases}
    1 & \text{if $F \equiv 1$}\\
    F'|_{\tau} & \text{if $\exists C \in F$ s.t. $F' = F \setminus \{C\}$,} \text{$\ell \in C$ 
    and $\tau(\ell) = 1$} \\
    F'|_{\tau} \cup \{C'\}  & \text{if $\exists C \in F$ s.t. $F' = F \setminus \{C\}$,} \text{$\ell \in C$,}  \text{$C' = C \setminus \{\ell\}$} 
       \\&\text{ and ($\tau(\ell) = 0$ or $\{\neg{\ell}\} \in F$)}
\end{cases}$

We often consider an assignment $\tau$ as a set of literals it assigns and $\var{\tau}$ denotes the set of variables assigned by $\tau$. For two assignments $\tau_1$ and $\tau_2$, $\tau_1$ satisfies $\tau_2$, denoted as $\tau_1 \models \tau_2$, 
if ${\tau_1}_{\downarrow \var{\tau_2}} = \tau_2$. Otherwise, $\tau_1$ does not satisfy $\tau_2$, denoted as $\tau_1 \not \models \tau_2$. 

An XOR constraint over $\var{F}$ is a Boolean ``XOR'' ($\oplus$) applied to the variables $\var{F}$. 
A random XOR constraint over variables $\{x_1, \ldots, x_k\}$ is expressed as $a_1 \cdot x_1 \oplus \ldots a_k \cdot x_k \oplus b$, where all $a_i$ and $b$ follow the {\em Bernoulli} distribution with a probability of $\sfrac{1}{2}$.
An XOR constraint $x_{i_1} \oplus \ldots x_{i_k} \oplus 1$ (or $x_{i_1} \oplus \ldots x_{i_k} \oplus 0$ resp.) is evaluated as \true if an even (or odd resp.) number of variables from $\{x_{i_1}, \ldots, x_{i_k}\}$ are assigned to \true. 

To define minimal models of a propositional formula $F$, we introduce an {\em ordering operator} over models.
For two given models $\tau_1$ and $\tau_2$, $\tau_1$ is considered {\em smaller} than $\tau_2$, denoted as $\tau_1 \leq \tau_2$, if and only if for each $x \in \var{F}$, $\tau_1(x) \leq \tau_2(x)$. 
We define $\tau_1$ as {\em strictly smaller} than $\tau_2$, denoted as $\tau_1 < \tau_2$, if $\tau_1 \leq \tau_2$ and $\tau_1 \neq \tau_2$.
A model $\tau$ is a {\em minimal model} of $F$ if and only if $\tau$ is a model of $F$ and no model of $F$ is strictly smaller than $\tau$. We use the notation $\mmodel{F}$ to denote minimal models of $F$ and 
for a set $X \subseteq \var{F}$, $\mmodel{F}_{\downarrow X}$ denotes the minimal models of $F$ projected onto the variable set $X$.
The minimal model counting problem seeks to determine the cardinality of $\mmodel{F}$, denoted $\Card{\mmodel{F}}$.

In this paper, we sometimes represent minimal models by listing the variables assigned as \true. For 
example, suppose $\var{F} = \{a,b,c\}$ and under minimal model $\tau = \{a, b\}$,  
$\tau(a) = \tau(b) = \true$ and $\tau(c) = \false$. The notation $\neg{\tau}$ denotes the negation of assignment $\tau$; in fact, $\neg{\tau}$ is a clause or disjunction of literals (e.g., when $\tau = \{a, b\}$, $\neg{\tau} = \neg{a} \vee \neg{b}$). 
Throughout the paper, we use the notations $\tau$ and $\sigma$ to denote an arbitrary assignment and a minimal model of $F$, respectively.
For each model $\sigma \in \mmodel{F}$, each of the variables assigned to \true is {\em justified}; more specifically, 
for every literal $\ell \in \sigma$, there exists a clause $c \in F$ such that $\sigma \setminus \{\ell\} \not \models c$. Otherwise, 
$\sigma \setminus \{\ell\} < \sigma$, is a model of $F$.

\paragraph{Answer Set Programming.}
An \textit{answer set program} $P$ consists of a set of rules, each rule is structured as follows:
\begin{align}
\label{eq:general_rule}
\text{Rule $r$:~~}a_1 \vee \ldots a_k \leftarrow b_1, \ldots, b_m, \textsf{not } c_1, \ldots, \textsf{not } c_n
\end{align}
where, $a_1, \ldots, a_k, b_1, \ldots, b_m, c_1, \ldots, c_n$ are propositional variables or atoms, and $k,m,n$ are non-negative integers. 
The notations $\rules{P}$ and $\at{P}$ denote the rules and atoms within the program $P$. 
In rule $r$, the operator ``\textsf{not}'' denotes \textit{default negation}~\citep{clark1978}. For each 
rule $r$ (\cref{eq:general_rule}), we adopt the following notations: the atom set $\{a_1, \ldots, a_k\}$ constitutes the {\em head} of $r$, denoted by $\head{r}$, the set $\{b_1, \ldots, b_m\}$ is referred to as the {\em positive body atoms} of $r$, denoted by $\body{r}^+$, and the set $\{c_1, \ldots, c_n\}$ is referred to as the \textit{negative body atoms} of $r$, denoted by $\body{r}^-$.
A rule $r$ called a {\em constraint} when $\head{r}$ contains no atom.
A program $P$ is called a {\em disjunctive logic program} if $\exists r \in \rules{P}$ such that $\Card{\head{r}} \geq 2$~\citep{BD1994}.

In ASP, an interpretation $M$ over $\at{P}$ specifies which atoms are assigned \true; that is, an atom $a$ is \true under $M$ if and only if $a \in M$ (or \false when $a \not\in M$ resp.). 
An interpretation $M$ satisfies a rule $r$, denoted by $M \models r$, if and only if $(\head{r} \cup \body{r}^{-}) \cap M \neq \emptyset$ or $\body{r}^{+} \setminus M \neq \emptyset$. An interpretation $M$ is a {\em model} of $P$, denoted by $M \models P$, when $\forall_{r \in \rules{P}} M \models r$. 
The \textit{Gelfond-Lifschitz (GL) reduct} of a program $P$, with respect to an interpretation $M$, is defined as $P^M = \{\head{r} \leftarrow \body{r}^+| r \in \rules{P}, \body{r}^- \cap M = \emptyset\}$~\citep{GL1991}.
An interpretation $M$ is an {\em answer set} of $P$ if $M \models P$ and no $M\textprime \subset M$ exists such that $M\textprime \models P^M$.
We denote the answer sets of program $P$ using the notation $\answer{P}$.

\paragraph{From Minimal Models to Answer Sets.}
Consider a Boolean formula, $F = \bigwedge_{i} C_i$, where each clause is of the form: $C_i = \ell_0 \vee \ldots \ell_{k} \vee \neg{\ell_{k+1}} \vee \ldots \neg \ell_{m}$.
We can transform each clause $C_i$ into a rule $r$ of the form: $\ell_{0} \vee \ldots \vee \ell_{k} \leftarrow \ell_{k+1}, \ldots, \ell_{m}$. Given a formula $F$, 
let us denote this transformation by the notation $\dlp{F}$. Each minimal model of $F$ corresponds uniquely to an answer set of $\dlp{F}$ (the proof is deferred to the appendix).

\paragraph{Approximate Lower Bound.}
We denote the probability of an event $e$ using the notation $\pr{e}$.
For a Boolean formula $F$, let $c$ represents a lower bound estimate for the number of minimal models of $F$. We assert that 
$c$ is a lower bound for the number of minimal models with a {\em confidence} $\delta$, when 
$\pr{c \leq \Card{\mmodel{F}}} \geq 1 - \delta$.

\paragraph{Minimal Generator in Itemset Mining.}
We define transactions over a finite set of items, denoted by $\items$. A \textit{transaction} $t_i$ is an ordered pair of $(i, I_i)$, where $i$ is the unique identifier of the transaction and $I_i \subseteq \items$ represents the set of items involved in the transaction. 
A transaction database is a collection of transactions, where each uniquely identified by the identifier $i$, corresponding to the transaction $t_i$. A transaction $(i, I_i)$ supports an itemset $J \subseteq \mathcal{I}$ if $J \subseteq I_i$.
The \textit{cover} of an itemset $J$ within a database $D$, denoted as $C(J,D)$, is defined as: $\cover{J,D} = \{i | (i,I_i) \in D \text{ and } J \subseteq I_i\}$.
Given an itemset $I$ and transaction database $D$, the itemset $I$ is a \textit{minimal generator} of $D$ if, for every itemset $J$ where $J \subset I$, it holds that $\cover{I,D} \subset \cover{J,D}$.

\paragraph{Encoding Minimal Generators as Minimal Models.}
Given a transaction database $D$, we encode a Boolean formula $\mingen{D}$ such that minimal models of $\mingen{D}$ correspond one-to-one with the minimal generators of $D$.
This encoding introduces two types of variables: (i) for each item $a \in \items$, we introduce a variable $p_a$ to denote that $a$ is present in a minimal generator  
(ii) for each transaction $t_i$, we introduce a variable $q_i$ to denote the presence of the itemset in the transaction $t_i$. Given a transaction database $D = \{t_i| i = 1,\ldots n\}$, consisting of the union of transactions $t_i$,
consider the following Boolean formula:
\begin{align}
    \mingen{D} = \bigwedge_{i=1}^{n} \bigg( \neg{q_i} \rightarrow \bigvee_{a \in \items \setminus I_i} p_a \bigg)
\end{align}
\begin{lemma}
Given a transaction database $D$, $\sigma$ is a minimal model of $\mingen{D}$ if and only if the corresponding itemset $I_{\sigma} = \{a | p_a \in \sigma\}$ is a minimal generator of $D$.
\label{lemma:minimal_model_to_minimal_generator}
\end{lemma}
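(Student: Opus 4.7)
The plan is to first rewrite $\mingen{D}$ as a CNF and read off when an arbitrary assignment is a model, then pin down the canonical shape of a minimal model, and finally connect that shape to the minimal generator definition. Rewriting the implication, the $i$-th conjunct becomes the clause $q_i \vee \bigvee_{a \in \items \setminus I_i} p_a$. For any assignment $\sigma$, letting $I_\sigma = \{a : p_a \in \sigma\}$ and $Q_\sigma = \{i : q_i \in \sigma\}$, the $i$-th clause is violated exactly when $I_\sigma \subseteq I_i$ (so no $p_a$ with $a \notin I_i$ is \true) and $i \notin Q_\sigma$. Hence $\sigma \models \mingen{D}$ if and only if $Q_\sigma \supseteq \cover{I_\sigma, D}$.

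Next I would observe that each $q_i$ occurs only in the $i$-th clause, so in any minimal model $\sigma$ the set $Q_\sigma$ is pinned down by $I_\sigma$; more precisely, $Q_\sigma = \cover{I_\sigma, D}$, because any $i \in Q_\sigma \setminus \cover{I_\sigma, D}$ could be flipped to \false without violating a clause, yielding a strictly smaller model. Hence every minimal model has the canonical form $(I, \cover{I, D})$ determined by its $p$-part $I = I_\sigma$.

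The core step is then the equivalence of this canonical form being minimal with $I$ being a minimal generator. For the forward direction, if $I$ is not a minimal generator there exists $J \subsetneq I$ with $\cover{J, D} = \cover{I, D}$; the canonical assignment built from $J$ is a model by the first paragraph and is pointwise strictly smaller than $(I, \cover{I, D})$, contradicting minimality. Conversely, assuming $I$ is a minimal generator, any candidate $\sigma' < (I, \cover{I, D})$ satisfies $I_{\sigma'} \subseteq I$ and $Q_{\sigma'} \subseteq \cover{I, D}$ with at least one inclusion strict: either $I_{\sigma'} = I$ and $Q_{\sigma'} \subsetneq \cover{I, D}$ drops a required $q_i$ and violates a clause; or $I_{\sigma'} \subsetneq I$, in which case the minimal generator property of $I$ yields $\cover{I_{\sigma'}, D} \supsetneq \cover{I, D} \supseteq Q_{\sigma'}$, once more violating the model condition.

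The step I expect to be the main obstacle is exactly this last case, since shrinking the $p$-part tends to enlarge the forced $q$-part, so one might worry that a ``diagonal'' strictly smaller model could sneak through by trading $p_a$'s for fewer $q_i$'s. The argument is rescued by the antitone behavior of $\cover{\cdot, D}$ with respect to itemset containment, and the minimal generator hypothesis is precisely what ensures this antitonicity is \emph{strict} along every proper subset, blocking any such diagonal descent.
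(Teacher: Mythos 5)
Your proof is correct and follows essentially the same route as the paper's: both first pin down that any minimal model $\sigma$ must have its $q$-part equal to $\cover{I_\sigma,D}$ (you via the flip-a-redundant-$q_i$ argument, the paper via justification of the $q$-literals), and then derive the equivalence with minimal generators from the antitone behavior of $\cover{\cdot,D}$ under the same canonical-assignment correspondence. No gap; your explicit model characterization $Q_\sigma \supseteq \cover{I_\sigma,D}$ is just a cleaner packaging of the paper's case analysis.
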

The encoding of $\mingen{D}$ bears similarities to the encoding detailed in~\citep{JSS2017,Salhi2019}. However, our encoding achieves compactness by incorporating a one-sided implication, which enhances the efficiency of the representation.
The proof of~\Cref{lemma:minimal_model_to_minimal_generator} is deferred to the appendix.

\section{Related Works}
\label{section:review}
Given its significance in numerous reasoning tasks, minimal model reasoning has garnered considerable attention from the scientific community.

Minimal models of a Boolean formula $F$ can be computed using an iterative approach with a SAT solver~\citep{LYZR2021,MPM2015,MHJPB2013}. The fundamental principle is as follows: for any model 
$\alpha \in \mmodel{F}$, no model of $F$ can exist that is strictly smaller than $\alpha$; thus, $F \wedge \neg{\alpha}$ yields no model. 
Conversely, if $F \wedge \neg{\alpha}$ returns a model, it must be strictly smaller than $\alpha$.

Minimal models can be efficiently determined using {\em unsatisfiable core-based} MaxSAT algorithms~\citep{alviano2017}. 
This technique leverages the unsatisfiable core analysis commonly used in MaxSAT solvers and operates within an incremental solver to enumerate minimal models sorted by their size.
In parallel, another line of research focuses on the enumeration of minimal models by applying cardinality constraints to calculate models of bounded size~\citep{LS2008,FVCG2016}. 
Notably, Faber et al.~\citeyear{FVCG2016} employed an algorithm that utilized an external solver for the enumeration of cardinality-minimal models of a given formula.
Upon finding a minimal model, a blocking clause is integrated into the input formula, ensuring that these models are not revisited by the external solver.

There exists a close relationship between minimal models of propositional formula and answer sets of ASP program~\citep{BD1994,LL2006}. 
Beyond solving disjunctive logic programs (ref.~\Cref{sec:preliminaries}), minimal models can also be effectively computed using specialized techniques within the context of ASP, such as domain heuristics~\citep{GKRR2013} and preference relations~\citep{BDRS2015}.

Due to the intractability of minimal model finding, 
research has branched into exploring specific subclasses of positive CNF formulas where minimal models can be efficiently identified within polynomial time~\citep{BD1996, ABFL2017}. 
Notably, a Horn formula possesses a singular minimal model, which can be derived in linear time using unit propagation~\citep{BD1996}.
Rachel and Luigi~\citeyear{RP1997} developed an {\em elimination algorithm} designed to find and verify minimal models for {\em head-cycle-free} formulas. 
Fabrizio et al.~\citeyear{ABFL2022} introduced the {\em Generalized Elimination Algorithm} (GEA), capable of identifying minimal models across any positive formula when paired with a suitably chosen eliminating operator.
The efficiency of the GEA hinges on the complexity of the specific eliminating operator used. With an appropriate eliminating operator, the GEA can determine minimal models of head-elementary-set-free CNF formulas in polynomial time. Notably, this category is a broader superclass of the head-cycle-free subclass.

Graph-theoretic properties have been effectively utilized in the reasoning about minimal models. 
Specifically, Fabrizio et al.~\citeyear{ABFL2022} demonstrated that minimal models of positive CNF formulas can be decomposed based on the structure of their dependency graph. 
Furthermore, they introduced an algorithm that leverages model decomposition, utilizing the underlying dependency graph to facilitate the discovery of minimal models. 
This approach underscores the utility of graph-theoretic concepts in enhancing the efficiency and understanding of minimal model reasoning.

To the best of our knowledge, the literature on minimal model counting is relatively sparse. 
The complexity of counting minimal models for specific structures of Boolean formulas, such as Horn, dual Horn, bijunctive, and affine, has been established as $\#\p$~\citep{DH2008}. 
This complexity is notably lower than the general case complexity, which is $\#\co$-$\np$-complete~\citep{KK2003}.

\section{Estimating the Number of Minimal Models}
\label{sec:estimation_methods}
In this section, we introduce methods for determining a lower bound for the number of minimal models of a Boolean formula. 
We detail two specific approaches aimed at estimating this number. 
The first method is based on the {\em decomposition} of the input formula, whereas the second method utilizes a hashing-based approach of approximate model counting. 

\subsection{Formula Decomposition and Minimal Model Counting}
\label{subsec:projenum}
Considering a Boolean formula $F = F_1 \wedge F_2$, we define the components $F_1$ and $F_2$ as {\em disjoint} 
if no variable of $F$ is mentioned by both components $F_1$ and $F_2$ (i.e., $\var{F_1} \cap \var{F_2} = \emptyset$). 
Under this condition, the models of $F$ can be independently derived from the models of $F_1$ and $F_2$
 and the vice versa. 
Thus, if $F_1$ and $F_2$ are disjoint in the formula $F = F_1 \wedge F_2$, 
the total number of models of $F$ is the product of the number of models of $F_1$ and $F_2$. This principle underpins the decomposition technique frequently applied in knowledge compilation~\citep{LM2017}.

Building on the concept of the knowledge compilation techniques, we introduce a strategy centered on formula decomposition to count minimal models. 
Unlike methods that count models for each disjoint component, we enumerate minimal models of $F$ projected onto the variables of disjoint components. Our approach incorporates a level of enumeration that stops upon enumerating a specific count of minimal models, thereby providing a lower bound estimate of the total number of minimal models.
Our method utilizes a straightforward ``Cut'' mechanism to facilitate formula decomposition.

\paragraph{Formula Decomposition by ``Cut''}
A ``cut'' $\cut$ within a formula $F$ is identified as a subset of $\var{F}$ such that for every assignment $\tau \in 2^{\cut}$, 
$F|_{\tau}$ {\em effectively decomposes} into disjoint components~\citep{LM2017}. This concept is often used in context of model counting~\citep{KJ2021}. 
It is important to note that models of $F|_{\tau}$ can be directly expanded into models of $F$. 

\paragraph{Challenges in Knowledge Compilation for Counting Minimal Models}
When it comes to counting minimal models, the straightforward application of unit propagation and the conventional decomposition approach are not viable.
More specifically, simple unit propagation does not preserve minimal models. Additionally, the count of minimal models cannot be simply calculated by multiplying the counts of minimal models of its disjoint components. 
An example provided subsequently demonstrates these inconsistencies.

\begin{example}
\label{example:example1}
Consider a formula $F = \{a \vee b \vee c, \neg{a} \vee \neg{b} \vee d, \neg{a} \vee \neg{b} \vee e\}$.\\
(i)~With the assignment $\tau_1 = \{e\}$. Then $\{a\}$ becomes a minimal model of $F|_{\tau_1}$. However, the extended assignment $\tau_1 \cup \{a\}$ is not a minimal 
model of $F$.\\
(ii)~Considering a cut $\cut = \{a, b\}$ and the partial assignment $\tau_2 = \{a, b\}$, then 
$F|_{\tau_2}$ is decomposed into two components, each containing the unit clauses $\{d\}$ and $\{e\}$, respectively.
Despite this, the combined assignment $\tau_2 \cup \{d, e\} = \{a, b, d, e\}$ is not a minimal model of $F$, as  
a strictly smaller assignment $\{a, d, e\}$ also satisfies $F$.
\end{example}

Traditional methods such as unit propagation and formula decomposition cannot be straightforwardly applied to minimal model counting. Importantly, every atom in a minimal model must be justified. 
In~\Cref{example:example1}, (i)~the variable $e$ is assigned truth values without justification, leading to an incorrect minimal model when the assignment is extended with the minimal model of $F|_{\{e\}}$. 
(ii)~The formula is decomposed without justifying the variables $a$ and $b$, resulting in incorrect minimal model when combining assignments from the other two components of $F|_{\{a,b\}}$.
Therefore, for accurate minimal model counting, operations such as unit propagation and formula decomposition must be applied only to assignments that are justified.
Consequently, a knowledge compiler for minimal model counting must frequently verify the justification of assignments. It is worth noting that verifying the justification of an assignment is computationally intractable.

\paragraph{Minimal Model Counting using Justified Assignment}
We introduce the concept of a {\em justified assignment} $\tau^{\star}$, based on a given assignment $\tau$. 
Within the minimal model semantics, any assignment of \false is inherently justified.
Therefore, we define justified assignment $\tau^{\star}$ as follows:
    $\tau^{\star} = \tau_{\downarrow \{v \in \var{F} | \tau(v) = 0\}}$.

By applying unit propagation of $\tau^{\star}$, instead of $\tau$, every minimal model derived from $F|_{\tau^{\star}}$ can be seamlessly extended into a 
minimal model of $F$. 
While $F|_{\tau}$ effectively decompose into multiple disjoint components, the use of a justified assignment $\tau^{\star}$ does not necessarily lead to effectively $F|_{\tau^{\star}}$ decomposing into disjoint components. 

A basic approach to counting minimal models involves enumerating all minimal models.
When a formula is decomposed into multiple disjoint components, the number of minimal models can be determined by conducting a projected enumeration over these disjoint variable sets and subsequently multiplying the counts of projected minimal models. 
The following corollary outlines how projected enumeration can be employed across disjoint variable sets to accurately count the number of minimal models.

\begin{lemma}
\label{theorem:decomposition}
Let $F$ be decomposed into disjoint components $F_1, \ldots, F_k$, with each component $F_i$ having a variable set $V_i = \var{F_i}$, for $i \in [1,k]$. Suppose $V = \var{F} = \bigcup_{i} V_i$. 
Then, 
$\Card{\mmodel{F}} = \prod_{i=1}^k \Card{\mmodel{F}_{\downarrow V_i}}$.
\end{lemma}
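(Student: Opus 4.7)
The plan is to establish a bijection between $\mmodel{F}$ and the Cartesian product $\prod_{i=1}^k \mmodel{F}_{\downarrow V_i}$, from which the cardinality identity follows immediately. Since $V_1, \ldots, V_k$ partition $\var{F}$, any assignment $\sigma$ over $\var{F}$ is determined uniquely by the tuple of its projections $(\sigma_{\downarrow V_1}, \ldots, \sigma_{\downarrow V_k})$, so the mapping $\Phi(\sigma) = (\sigma_{\downarrow V_1}, \ldots, \sigma_{\downarrow V_k})$ is injective. The substance of the argument is showing that $\Phi$ lands in the claimed product and is surjective onto it.

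First I would prove the following key fact: an assignment $\sigma$ is a minimal model of $F$ if and only if $\sigma_{\downarrow V_i}$ is a minimal model of $F_i$ for every $i$. For the forward direction, suppose $\sigma \in \mmodel{F}$ but some projection $\sigma_{\downarrow V_i}$ were not minimal for $F_i$; then there exists $\sigma'_i < \sigma_{\downarrow V_i}$ with $\sigma'_i \models F_i$. Because $\var{F_j} \cap V_i = \emptyset$ for $j \neq i$, replacing the $V_i$-part of $\sigma$ by $\sigma'_i$ yields an assignment $\sigma^{\dagger} < \sigma$ that still satisfies every $F_j$ (the other components are untouched) and hence satisfies $F$, contradicting minimality of $\sigma$. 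For the converse, if each $\sigma_{\downarrow V_i} \in \mmodel{F_i}$, then $\sigma \models F$ because each $F_i$ only mentions variables in $V_i$; and any $\sigma' < \sigma$ satisfying $F$ would restrict to $\sigma'_{\downarrow V_i} \leq \sigma_{\downarrow V_i}$ for every $i$ with strict inequality for at least one $i$, and $\sigma'_{\downarrow V_i} \models F_i$, contradicting the minimality of that component's projection.

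With this equivalence in hand, I would conclude that $\mmodel{F}_{\downarrow V_i} = \mmodel{F_i}$ as sets, so $\Phi$ maps $\mmodel{F}$ into $\prod_i \mmodel{F}_{\downarrow V_i}$. Surjectivity follows by the other direction of the equivalence: given any tuple $(\sigma_1, \ldots, \sigma_k) \in \prod_i \mmodel{F_i}$, the disjointness of the $V_i$ lets us glue the $\sigma_i$ into a single assignment $\sigma$ over $\var{F}$, which the equivalence certifies to be a minimal model of $F$, with $\Phi(\sigma) = (\sigma_1, \ldots, \sigma_k)$. Hence $\Phi$ is a bijection, and counting gives $|\mmodel{F}| = \prod_{i=1}^k |\mmodel{F}_{\downarrow V_i}|$.

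The only delicate step is the forward direction of the equivalence, where one must be careful that replacing a single projection by a strictly smaller component-minimal model still yields a model of the whole $F$; this is precisely where disjointness of the $V_i$ is indispensable, since it guarantees that clauses of $F_j$ ($j \neq i$) are unaffected by the substitution. No computation beyond this local check is needed, so I do not anticipate a genuine obstacle.
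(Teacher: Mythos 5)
Your proof is correct and follows essentially the same route as the paper's: project minimal models of $F$ down onto the component variable sets and glue component-wise minimal models back up, with the disjointness of the $V_i$ doing all the work. Your write-up is in fact more complete than the paper's, since you make explicit the exchange argument (replacing one projection by a strictly smaller component model to contradict minimality of $\sigma$) that justifies why the glued assignment is a minimal model of $F$ --- a step the paper's proof essentially asserts without detail.
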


\paragraph{Algorithm: Counting Minimal Models by Projected Enumeration}
We introduce an enumeration-based algorithm, called \projenum, that leverages justified assignments and projected enumeration to accurately count the number of minimal models of a Boolean formula $F$.
The algorithm takes in as input a Boolean formula $F$ and a set of variables $\cut$, referred to as a ``cut'' in our context. 
To understand how the algorithm works, we introduce two new concepts: $\blockedmmodel{F, \blocked}$ and $\projenu{F, \tau, X}$.
$\blockedmmodel{F, \blocked}$ finds $\sigma \in \mmodel{F}$ such that $\forall \tau \in \blocked, \sigma \not \models \tau$; here, 
$\blocked$ is a set of {\em blocking} clauses and each blocking clause is an assignment $\tau$.
$\projenu{F, \tau, X}$ enumerates the set $\{\sigma_{\downarrow X} | \sigma \in \mmodel{F}, \sigma \models \tau\}$, where $\tau$ serves as the {\em conditioning} factor and $X$ serves as the projection set. 
The algorithm iteratively processes minimal models of $F$ (Line~\ref{line:whileloop}), starting with an initially empty set of blocking clauses ($\blocked = \emptyset$).
Upon identifying a minimal model $\sigma$, the algorithm projects $\sigma$ onto $\cut$, denoting the projected set as $\tau$. 
Subsequently,~\Cref{alg:projected_enumeration} enumerates all minimal models $\sigma \in \mmodel{F}$ that satisfy $\sigma \models \tau$. 

To address the inefficiency associated with brute-force enumeration, the algorithm utilizes the concept of justified assignment and projected enumeration (Line~\ref{line:projection}).
~\Cref{theorem:decomposition} establishes that the number of minimal models can be counted through multiplication.
The notation $\comp{F}$ (Line \ref{line:forloop}) denotes all disjoint components of the formula $F$. 
It is important to note that if $F|_{\tau^{\star}}$ does not decompose into more than one component, then the projection variable set $X$ defaults to $\var{F}$, 
which leads to brute-force enumeration of non-projected minimal models. 
Finally, the algorithm adds $\tau$ to $\blocked$ (Line~\ref{line:blocking}) to prevent the re-enumeration of the same minimal models. 

\begin{algorithm} \caption{\funcname(F, \cut)}
    \label{alg:projected_enumeration}
    \begin{algorithmic}[1]
    \State $\cnt \gets 0$, $\blocked \gets \emptyset$
    \While {$\exists \sigma \in \blockedmmodel{F, \blocked}$}\label{line:whileloop}
    \State $\tau \gets \sigma_{\downarrow \cut}$, $d \gets 1$
    \ForEach {$\mathsf{comp} \in \comp{F|_{\tau^{\star}}}$}\label{line:forloop}
    \State $d = d \times \Card{\projenu{F, \tau, \var{\mathsf{comp}}}}$\label{line:projection}
    \EndFor
    \State $\cnt \gets \cnt + d$
    \State $\blocked.\mathsf{add}(\tau)$\label{line:blocking}
    \EndWhile
    \State \Return{$\cnt$}
    \end{algorithmic}
\end{algorithm}
\paragraph{Implementation Details of \projenum.}
We implemented \projenum using Python. 
To find minimal models using $\blockedmmodel{F, \blocked}$, the algorithm invokes an ASP solver on $\dlp{F}$. Each assignment $\tau \in \blocked$ is incorporated as a constraint~\citep{ADFPR2022}, which ensures that minimal models of $F$ are preserved~\cite[see Corollary.~2]{KESHFM2022}.
To compute $\projenu{F, \tau, X}$,~\Cref{alg:projected_enumeration} employs an ASP solver on $\dlp{F}$, using $X$ as the projection set. Additionally, it incorporates each literal from $\tau$ as a {\em facet} into $\dlp{F}$~\citep{ARS2018} to ensure that the condition $\tau$ is satisfied.
We noted that the function $\projenu{F, \tau, X}$ requires more time to enumerate all minimal models. To leverage the benefits of decomposition, we enumerate upto a specific threshold number of minimal models (set the threshold to $10^6$ in our experiment) invoking $\projenu{F, \tau, X}$.  
Consequently, our prototype either accurately counts the total number of minimal models or provides a lower bound. 
Employing a tree decomposition technique~\citep{HS2018}, we calculated a cut of the formula that effectively decompose the input formula into several components.

\subsection{Hashing-based Minimal Model Counting}
\label{subsec:hashcount}

The number of minimal models can be approximated using a hashing-based model counting technique, which adds constraints that restrict the search space. Specifically, this method applies {\em uniform and random} XOR constraints to a formula 
$F$, focusing the search on a smaller subspace~\citep{GSS2021}. 
A particular XOR-based model counter demonstrates that 
if $t$ trials are conducted where $s$ random and uniform XOR constraints are added each time, and the constrained formula of $F$ 
is satisfiable in all $t$ cases, then $F$ has at least $2^{s - \alpha}$ models with high confidence, where $\alpha$ is the {\em precision slack}~\citep{GSS2006}. 
Each XOR constraint incorporates variables from $\var{F}$. Our approach to minimal model counting fundamentally derives from the strategy of introducing random and uniform XOR constraints to the formula.
In the domain of approximate model counting and sampling, the XOR constraints consist of variables from a subset of $\var{F}$, denoted as $\support$ within our algorithm, which is widely known as \textit{independent support}~\citep{CMV2016, SM2022}.

\Cref{alg:lower_bound_on_minimal_models_counting} outlines a hashing-based algorithm, named $\hashcounter$, for determining the lower bound of minimal models of a Boolean formula $F$. 
This algorithm takes in a Boolean formula $F$, an independent support $\support$, and a confidence parameter $\delta$.
During its execution, the algorithm generates total $\Card{\support} - 1$ random and uniform XOR constraints, denoted as $Q^i$, where $i$ ranges from $1$ to $\Card{\support} - 1$.
To better explain the operation of the algorithm, we introduce a notation: $\mmodel{F^m}$ represents the minimal models of $F$ satisfying first $m$ XOR constraints, $Q^1, \ldots, Q^m$. 
Upon generating random and uniform XOR constraints, the algorithm finds the value of $m$ such that $\Card{\mmodel{F^m}} > 0$ (meaning that $\exists \sigma \in \mmodel{F}, \sigma \models Q^1 \wedge \ldots \wedge Q^m$), 
while $\Card{\mmodel{F^{m+1}}} = 0$ (meaning that $\not \exists \sigma \in \mmodel{F}, \sigma \models Q^1 \wedge \ldots \wedge Q^{m+1}$) by iterating a loop (Line~\ref{line:while_loop}). 
The loop terminates either when a $\timeout$ occurs or when it successfully identifies the value of $m$. If the $\timeout$ happens, 
the algorithm assigns the maximum observed value of $m$ (denoted as $\lastMMFound$) to $\lst$, ensuring that $\Card{\mmodel{F^{\lastMMFound}}} \geq 1$ (Line~\ref{line:max_value_of_m}), which is sufficient to offer lower bounds. Finally,~\Cref{alg:lower_bound_on_minimal_models_counting} returns $2^{\lst - \alpha}$ as the probabilistic lower bound of $\Card{\mmodel{F}}$.

\begin{algorithm}[h!] \caption{\approxfunction($F, \support, \delta$)}
    \label{alg:lower_bound_on_minimal_models_counting}
    \begin{algorithmic}[1]
        \State $\alpha \gets -\log_2{(\delta)}+1$
        \State generate $\Card{\support} - 1$ random constraints, namely $Q^1, \ldots, Q^{\Card{\support} - 1}$
        \State $\bigcell[0] \gets 1, \bigcell[\Card{\support}] \gets 0, \loindex \gets 0, \hiindex \gets \Card{\support}, m \gets 1$
        \State $\lastMMFound \gets \bot, \lst \gets \bot$
        \For {$i \gets 1$ to $\Card{\support} - 1$} $\bigcell[i] \gets \bot$
        \EndFor
        \While {$\true$} \label{line:while_loop}
        \If {$\timeout$} 
        $\lst \gets \lastMMFound$ \label{line:max_value_of_m} \Break
        \EndIf
        \If {$\exists \sigma \in \mmodel{F^m}$} 
        \State $\lastMMFound \gets \mathsf{Max}(\lastMMFound,m)$
        \If {$\bigcell[m+1] = 0$} 
        $\lst \gets m$
        \Break
        \EndIf
        \For {$i \gets 1$ to $m$} $\bigcell[i] \gets 1$
        \EndFor
        \State $\loindex \gets m$
        \If {$2 \times m < \Card{\support}$}  
        $m \gets 2 \times m$
        \Else 
        \text{ }$m \gets \frac{(\hiindex + m)}{2}$
        \EndIf
        \Else 
        \If {$\bigcell[m-1] = 1$} 
        $\lst \gets m-1$
        \Break
        \EndIf
        \For {$i \gets m$ to $\Card{\support} - 1$} $\bigcell[i] \gets 0$
        \EndFor
        \State $\hiindex \gets m$
        \State $m \gets \frac{(\loindex + m)}{2}$
        \EndIf
        \EndWhile
    \State \Return{$2^{\lst - \alpha}$} \label{line:return_case_2}
    \end{algorithmic}
\end{algorithm}

\paragraph{Implementation Details of \hashcounter}
The effectiveness of an XOR-based model counter is dependent on the performance of a theory+XOR solver~\citep{SGM2020}. 
In our approach, we begin by transforming a given formula $F$ into a disjunctive logic program $\dlp{F}$ and introduce random and uniform XOR constraints into $\dlp{F}$ to effectively partition the minimal models of $F$.
To verify the presence of any models in the XOR-constrained ASP program, we leverage the ASP+XOR solver capabilities provided by ApproxASP~\citep{KESHFM2022}. 
To compute an independent support for \hashcounter, we implemented a prototype inspired by Arjun~\citep{SM2022}, which checks answer sets of a disjunctive logic program in accordance with Padoa’s theorem~\citep{Padoa1901}.

\subsection{Putting It All Together}
\begin{algorithm}[h!] \caption{$\mathsf{MinLB}(F, \delta$)}
    \label{alg:hybrid_algorithm}
    \begin{algorithmic}[1]
    \If {$\not \exists \sigma \in \mmodel{F}$}
    \Return $0$
    \EndIf
    \If {$\Card{\computecut{F}} \leq 50$}
        \Return{\funcname($F, \computecut{F}$)} \label{line:run_projenum}
    \Else
    \text{ } \Return {\approxfunction($F, \is{F}, \delta$)} \label{line:run_hashcount}
    \EndIf
    \end{algorithmic}
\end{algorithm}
We designed a hybrid solver \hybrid, presented in \Cref{alg:hybrid_algorithm}, that selects either \projenum or \hashcounter depending on the decomposability of the input formula. 
The core principle of \hybrid is that \projenum effectively leverages decomposition and projected enumeration on {\em easily decomposable} formulas.
We use the size of the cut ($\Card{\computecut{F}}$) as a proxy to measure the decomposability.
Thus, if $\Card{\computecut{F}}$ is small, then \hybrid employs \projenum on $F$ (Line~\ref{line:run_projenum}); otherwise, it employs \hashcounter (Line~\ref{line:run_hashcount}). 
\begin{theorem}
\label{thm:main_theorem}
    $\pr{\hybrid(F,\delta) \leq \Card{\mmodel{F}}} \geq 1 - \delta$
\end{theorem}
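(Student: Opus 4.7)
\medskip
\noindent\textbf{Proof plan.} The plan is to verify the inequality branch by branch according to~\Cref{alg:hybrid_algorithm}. The trivial base case is when $\mmodel{F}=\emptyset$: the algorithm returns $0$ and the inequality $0\leq\Card{\mmodel{F}}=0$ holds with certainty. In the remaining cases, the claim reduces to (a) a deterministic lower-bound guarantee for \projenum{} and (b) a probabilistic lower-bound guarantee for \hashcounter{}; since the two are invoked on disjoint conditions, a union bound is unnecessary and the overall confidence matches the weaker of the two, namely $1-\delta$.

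\medskip
\noindent\textbf{Branch via \projenum{} (deterministic).} First I would show that every outer-loop iteration of~\Cref{alg:projected_enumeration} adds to \cnt{} a value $d$ that lower-bounds the number of minimal models of $F$ whose projection onto $\cut$ equals the current $\tau$. Fix such an iteration. Because $\tau^{\star}$ consists only of \false-assignments, which are automatically justified in minimal-model semantics, any extension of a minimal model of $F|_{\tau^{\star}}$ is a minimal model of $F$, so enumerating projected minimal models under the conditioning $\tau$ inside each component of $F|_{\tau^{\star}}$ is sound. Applying~\Cref{theorem:decomposition} to $F|_{\tau^{\star}}$ decomposed into $\comp{F|_{\tau^{\star}}}$ yields that the product $d=\prod_{\mathsf{comp}}\Card{\projenu{F,\tau,\var{\mathsf{comp}}}}$ equals the exact count of minimal models consistent with $\tau$ whenever the inner enumerations run to completion, and is a strict under-estimate whenever the implementation-level threshold of $10^{6}$ truncates an enumeration. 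Next, I would argue that blocking $\tau$ after each iteration ensures that distinct outer iterations account for disjoint subsets of $\mmodel{F}$, since any two minimal models projecting to different $\tau$'s on $\cut$ are distinct. Summing over all iterations therefore never exceeds $\Card{\mmodel{F}}$, proving the inequality with probability one on this branch.

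\medskip
\noindent\textbf{Branch via \hashcounter{} (probabilistic).} Here I would invoke the standard XOR-hashing lower-bound argument from~\citep{GSS2006}: if $s$ uniform random XOR constraints are imposed on $F$ and the restricted instance is satisfiable, then with probability at least $1-2^{-(\alpha-1)}$ the unrestricted instance has at least $2^{s-\alpha}$ models. I would transfer this statement from the SAT setting to the minimal-model setting by working over $\dlp{F}$ and using the bijection between $\mmodel{F}$ and $\answer{\dlp{F}}$; crucially, ApproxASP provides uniform hashing over the answer sets of $\dlp{F}$ restricted to the independent support $\support$, so the classical guarantee applies verbatim with $\mmodel{F}$ in place of the model set. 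The algorithm identifies an index $\lst$ (either the genuine transition index, or the largest witness seen before \timeout) such that $\Card{\mmodel{F^{\lst}}}\geq 1$; plugging $s=\lst$ into the hashing inequality and choosing $\alpha=-\log_2(\delta)+1$ gives $\pr{2^{\lst-\alpha}\leq\Card{\mmodel{F}}}\geq 1-\delta$. Combining the three cases yields the theorem.

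\medskip
\noindent\textbf{Main obstacle.} The routine case analysis and the \projenum{} branch are straightforward given~\Cref{theorem:decomposition} and the justification invariant. The delicate step is the \hashcounter{} branch: one must carefully check that the binary-search schedule over $m$, together with the early termination on \timeout{} that replaces $\lst$ by the best witnessed $\lastMMFound$, does not bias the hashing analysis, i.e.\ that the guarantee of~\citep{GSS2006} still applies to the reported $\lst$ even when $\lst<m^{\text{final}}$. The argument is that the guarantee depends only on the satisfiability of the cell corresponding to the returned exponent under the prefix of $\lst$ random XORs, a property maintained throughout the search; establishing this formally is the principal technical content of the proof.
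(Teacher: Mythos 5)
Your proposal is correct and follows essentially the same route as the paper: a case split on which sub-solver \hybrid invokes, with a deterministic counting argument for \projenum built on \Cref{theorem:decomposition} (the paper's \Cref{lemma:proof_of_enumproj}) and the standard XOR-hashing lower-bound guarantee for \hashcounter (the paper's \Cref{lemma:guarantee_minimum}). The "main obstacle" you flag---that the binary search and \timeout-based fallback to $\lastMMFound$ might bias the hashing guarantee---is handled in the paper exactly as you suggest, by noting that the reported exponent is always witnessed by a nonempty cell and applying Markov's inequality (\Cref{lemma:prob_of_each_s}) with a union bound over all exponents $s > \sstar + \alpha$, which makes the search schedule and stopping rule irrelevant.
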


\section{Experimental Results}
\label{sec:experiment}

\paragraph{Benchmarks and Baselines}
Our benchmark set is collected from two different domains: (i) model counting benchmarks from recent competitions~\citep{FHH2020} and (ii) minimal generators benchmark from itemset mining dataset CP4IM\footnote{\url{https://dtai.cs.kuleuven.be/CP4IM/datasets/}}.
We used existing systems for minimal model reasoning as baselines. 
These included various approaches such as (i) repeated invocations of the SAT solver MiniSAT~\citep{LYZR2021}, (ii) the application of MaxSAT techniques~\citep{alviano2017}, (iii) domain-specific heuristics~\citep{GKRR2013}, and (iv) solving $\dlp{F}$ with ASP solvers, all systems primarily count via enumeration.
These systems either return the number of minimal models by enumerating all of them or a lower bound of the number of minimal models in cases where they run out of time or memory.
Experimentally, we observed that, for enumerating all minimal models, the technique solving disjunctive ASP programs using clingo~\citep{GKS2012} surpassed the other techniques. Therefore, we have exclusively reported the performance of clingo in our experimental analysis.
Additionally, we evaluated ApproxASP~\citep{KESHFM2022}, which offers $(\epsilon, \delta)$-guarantees in counting minimal models. We attempted an exact minimal model counting tool using a subtractive approach --- subtracting the {\em non-minimal model} count from the total model count—we denote the implementation using the notation \mincount in further analysis.
In our experiment, we ran \hashcounter with a confidence $\delta$ value of $0.2$ and ApproxASP with a confidence $\delta$ of $0.2$ and tolerance $\epsilon$ of $0.8$.
Note that we cannot compare with $\#$SAT-based exact answer set counter as $\dlp{F}$ is not necessarily a normal logic program~\cite{KCM2024}.

\paragraph{Environmental Settings}
All experiments were conducted on a high-performance computing cluster equipped with nodes featuring AMD EPYC $7713$ CPUs, each with $128$ real cores. 
Throughout the experiment, the runtime and memory limits were set to $5000$ seconds and $16$GB, respectively, for all considered tools.

\paragraph{Evaluation Metric}
The goal of our experimental analysis is to evaluate various minimal model counting tools based on their runtime and the quality of their lower bounds. 
To effectively assess these systems, it is essential to employ a metric that encompasses both runtime performance and the quality of the lower bound. 
Consequently, following the TAP score~\citep{APM2021,KM2023}, we have introduced a metric, called the \textit{Time Quality Penalty} (TQP) score, which is defined as follows:
\[
  \qes{t, C} = 
  \begin{cases}
    2 \times \tint, & \text{if no lower bound is returned}   \\
    t + \tint \times \frac{1 + \log{(\minc + 1)}}{1 + \log{(C+1)}}, & \text{otherwise } \\
  \end{cases}
\]
In the metric, $\tint$ represents the timeout for the experiment, $t$ denotes the runtime of a tool, $C$ is the lower bound returned by that tool,  
and $\minc$ is the minimum lower bound returned by any of the tools under consideration for the instance.
The TQP score is based on the following principle: lower runtime and higher lower bound yield a better score.

\subsection{Performance on Model Counting Competition Benchmark}
We present the TQP scores of \hybrid alongside other tools in~\Cref{table:qes_score_min_counting}. 
This table indicates that \hybrid achieves the lowest TQP scores. 
Among existing minimal model enumerators, clingo demonstrates the best performance in terms of TQP score. 
Additionally, in~\Cref{fig:minimal_model_counting_across_clingo}, we graphically compare the lower bounds returned by \projenum and \hashcounter against those returned by \clingo. 
Here, a point $(x, y)$ indicates that, for an instance, the lower bounds returned by our prototypes and clingo are $2^x$ and $2^y$, respectively. 
For an instance, if the corresponding point resides below the diagonal line, it indicates that \projenum (\hashcounter, resp.) returns a better lower bound than \clingo.
These plots clearly illustrate that \projenum and \hashcounter return better lower bounds compared to other existing minimal model enumerators.

\begin{table}[h]
    \centering
    \begin{tabular}{m{5em} m{5em} m{6em} m{10em}} 
    \toprule
    \clingo & \approxasp & \mincount & \hybrid \footnotesize{(our prototype)}\\
    \midrule
    6491 & 6379 & 7743 & 5599\\
    \bottomrule
    \end{tabular}
    \caption{The TQP scores of \hybrid and other tools on model counting benchmark.}
    \label{table:qes_score_min_counting}
\end{table}

\begin{figure*}[h!]
    \centering
        \begin{subfigure}[t]{0.48\textwidth}
            \centering
            \includegraphics[width=0.6\linewidth]{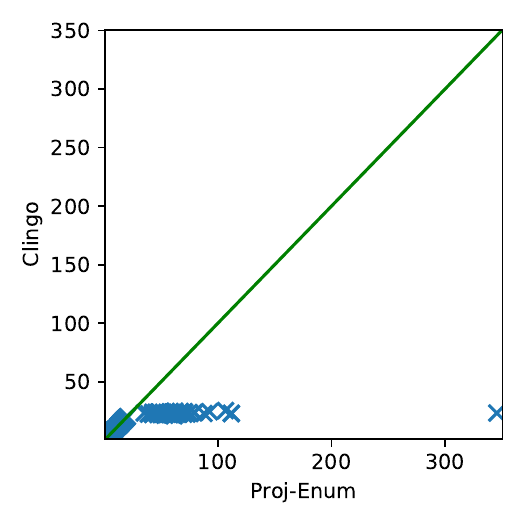}
            \caption{The lower bound returned by \projenum
            }
        \label{fig:power_d1_min_gen}
        \end{subfigure}
        \begin{subfigure}[t]{0.48\textwidth}
            \centering
            \includegraphics[width=0.6\linewidth]{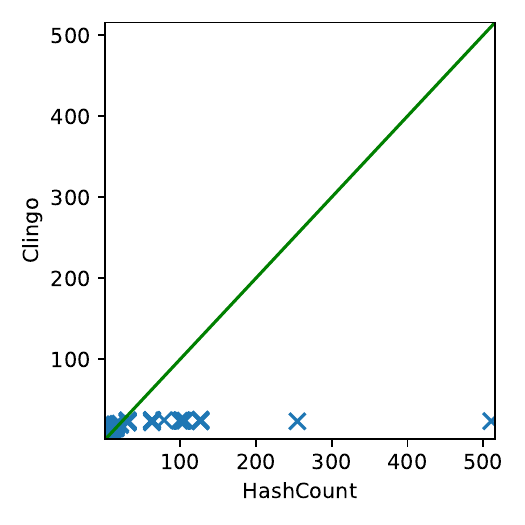}
            \caption{The lower bound returned by \hashcounter
            }
        \label{fig:power_mbound_min_gen}
        \end{subfigure}
    \caption{The lower bound of \projenum and \hashcounter vis-a-vis the lower bound returned by 
    \clingo on minimal model counting benchmark. The axes are in log scale.}
    \label{fig:minimal_model_counting_across_clingo}
\end{figure*}

\subsection{Performance on Minimal Generator Benchmark}
\Cref{table:qes_score_min_generators} showcases the TQP scores of \hybrid alongside other tools on the minimal generator benchmark. 
Notably, \hashcounter achieves the most favorable TQP scores on the benchmark. 
Additionally,~\Cref{fig:minimal_generator_counting_across_clingo} graphically compares the lower bounds returned by \projenum and \hashcounter against those computed by \clingo.  
\begin{table}[h]
    \centering
    \begin{tabular}{m{5em} m{5em} m{6em} m{10em}} 
    \toprule
    \clingo & \approxasp & \mincount & \hybrid \footnotesize{(our prototype)}\\
    \midrule
    6944 & 5713 & 9705 & 5043\\
    \bottomrule
    \end{tabular}
    \caption{The TQP scores of \hybrid and other tools on minimal generator benchmark.}
    \label{table:qes_score_min_generators}
\end{table}

\begin{figure*}[h!]
    \centering
    \begin{subfigure}[t]{0.48\textwidth}
        \centering
        \includegraphics[width=0.6\linewidth]{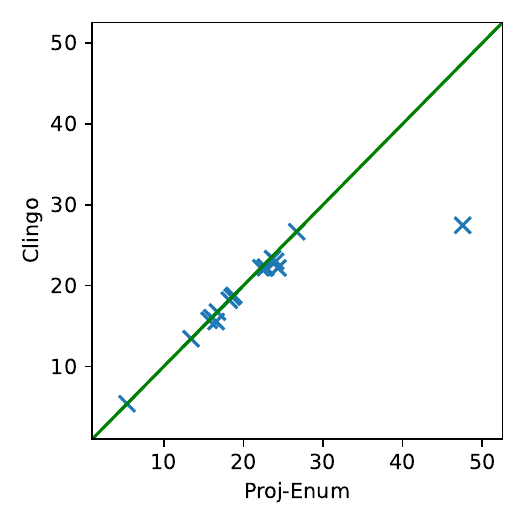}
        \caption{The lower bound returned by \projenum
        }
    \label{fig:power_d1_min_gen}
    \end{subfigure}
    \begin{subfigure}[t]{0.48\textwidth}
        \centering
        \includegraphics[width=0.6\linewidth]{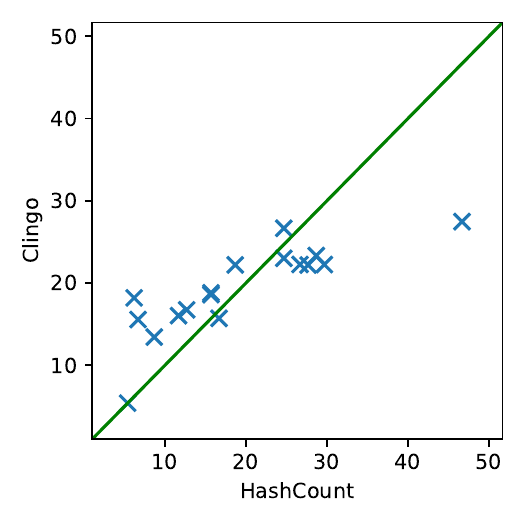}
        \caption{The lower bound returned by \hashcounter
        }
    \label{fig:power_mbound_min_gen}
    \end{subfigure}
    \caption{The lower bound returned by \projenum and \hashcounter vis-a-vis the lower bound given by 
    \clingo on minimal generators benchmark. The axes are in log scale.}
    \label{fig:minimal_generator_counting_across_clingo}
\end{figure*}

For a visual representation of the lower bounds returned by our \hashcounter and \projenum, we illustrate them graphically in~\Cref{fig:compare_count_of_minimal_models}.
In the plot, a point $(x, y)$ signifies that a tool returns a lower bound of at most $2^y$ for $x$ instances.
The plot demonstrates that the lower bounds returned by \projenum and \hashcounter surpass those of existing systems. The more experimental analysis is deferred to the appendix.
\begin{figure*}[h!]
    \centering
        \begin{subfigure}[t]{0.48\textwidth}
            \centering
            \includegraphics[width=0.8\linewidth]{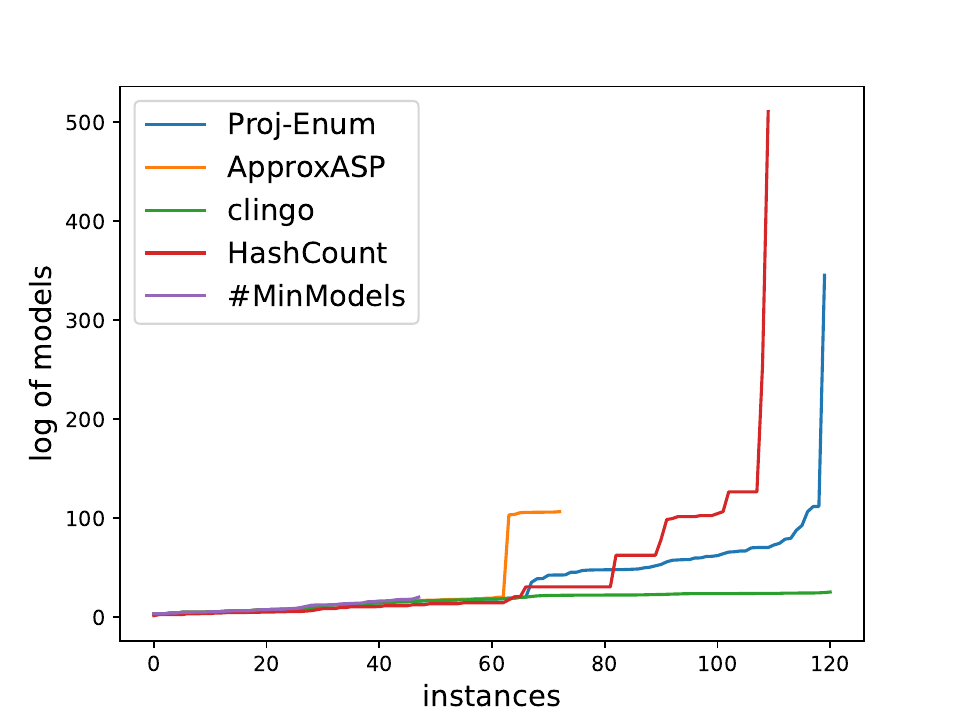}
            \caption{Model counting benchmark
            }
        \label{fig:compare_minimum_model_count}
        \end{subfigure}
        \begin{subfigure}[t]{0.48\textwidth}
            \centering
            \includegraphics[width=0.8\linewidth]{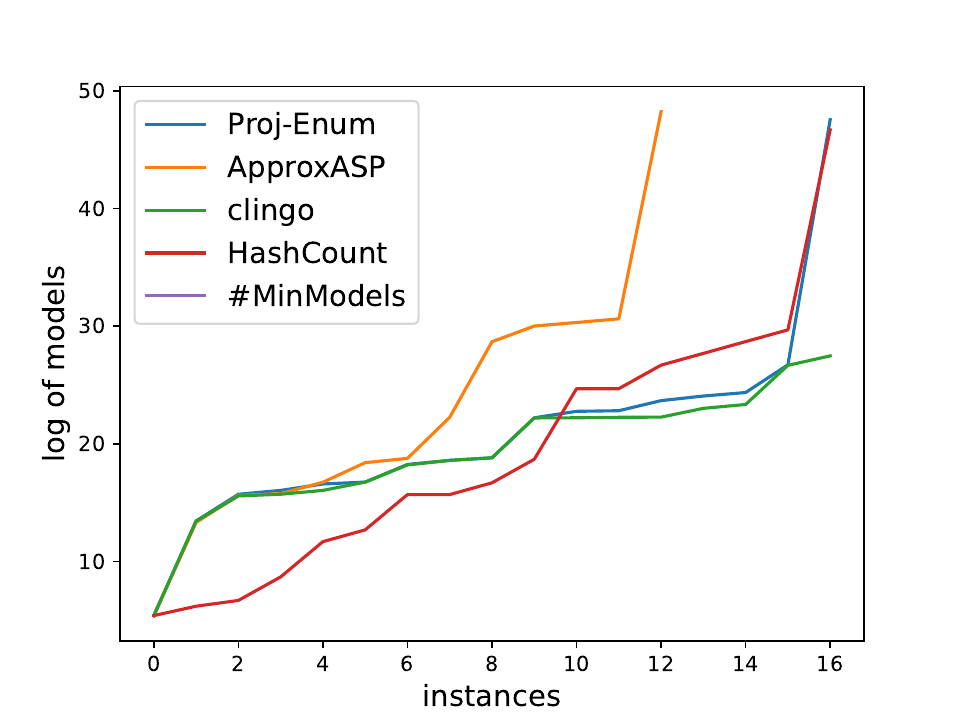}
            \caption{Minimal generators benchmark}
        \label{fig:compare_minimum_generator_count}
        \end{subfigure}
    \caption{The lower bounds returned by \projenum, \hashcounter, and existing minimal model counting tools. The $y$-axis show the log of the number of models.}
    \label{fig:compare_count_of_minimal_models}
\end{figure*}

\subsection{Another Performance Metric}
To facilitate the comparison of lower bounds returned by two tools, we have introduced another metric for comparative analysis. If tools $A$ and $B$ yield lower bounds $C_A$ and $C_B$ respectively, their {\em relative quality} is defined in the following manner:
\begin{align}
    r_{AB} = \frac{1 + \log{(C_A + 1)}}{1 + \log{(C_B + 1)}}    
    \label{eq:relative_quality}
\end{align}
If $r_{AB} > 1$, then the lower bound returned by tool $A$ is superior to that of tool $B$.

\paragraph{In-depth Study on \projenum and \hashcounter}
The performance of \projenum and \hashcounter contingent upon the size of the cut and independent support, respectively.
In this analysis, we explore the strengths and weaknesses of \projenum and \hashcounter by measuring their relative quality, as defined in~\Cref{eq:relative_quality}, across various sizes of cuts and independent supports, respectively.
This comparative analysis is visually represented in~\Cref{fig:ablation_study_across}, where clingo serves as the reference baseline.

In the graphical representations, each point $(x,y)$ corresponds to an instance where 
for the size of cut (independent support resp.) is $x$ and the prototype \projenum (\hashcounter resp.) achieves a relative quality of $y$. 
A relative quality exceeding $1$ indicates that the lower bound returned by \projenum or \hashcounter surpasses that of clingo. 
These plots reveal that \projenum tends to perform well with smaller cut sizes, while \hashcounter demonstrates better performance across a range from small to medium sizes of independent support.
\begin{figure*}[h!]
    \centering
        \begin{subfigure}[t]{0.48\textwidth}
            \centering
            \includegraphics[width=0.8\linewidth]{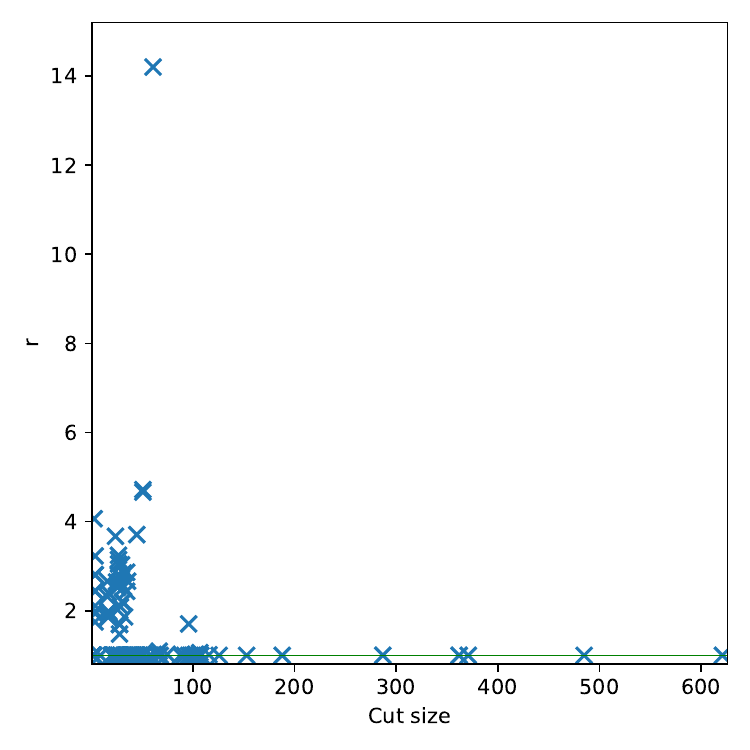}
            \caption{The relative quality of \projenum with varying cut size.
            }
        \label{fig:cut_d1_min_gen}
        \end{subfigure}
        \begin{subfigure}[t]{0.48\textwidth}
            \centering
            \includegraphics[width=0.8\linewidth]{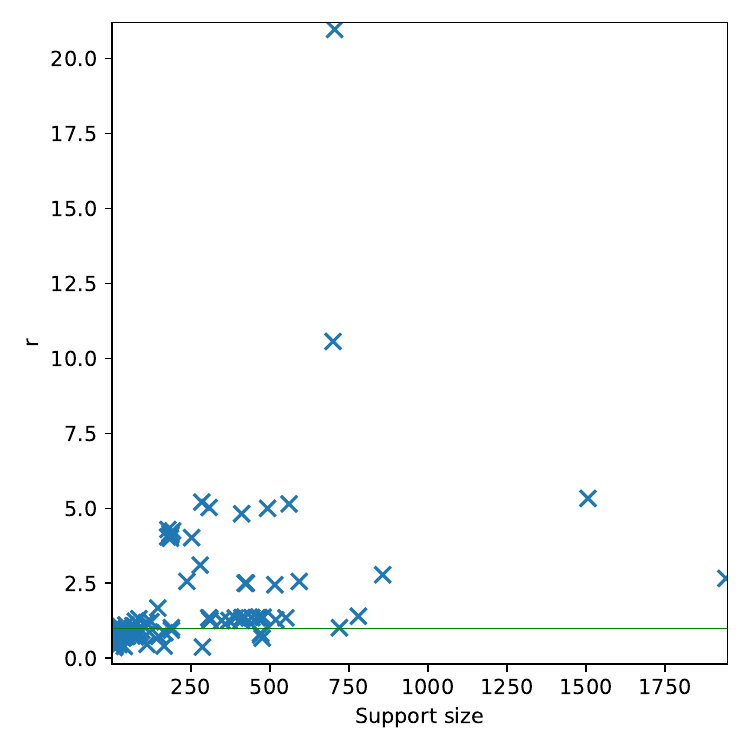}
            \caption{The relative quality of \hashcounter with varying independent support size.
            }
        \label{fig:support_mbound_min_gen}
        \end{subfigure}
    \caption{The relative quality of \projenum and \hashcounter vis-a-vis different cut and independent support size, where clingo is used as the reference baseline.
    The horizontal line is drawn across $r = 1$.}
    \label{fig:ablation_study_across}
\end{figure*}

\section{Conclusion}
\label{sec:conclusion}
This paper introduces two innovative methods for computing a lower bound on the number of minimal models. 
Our first method, \projenum, leverages knowledge compilation techniques to provide improved lower bounds for easily decomposable formulas. 
The second method, \hashcounter, utilizes recent advancements in ASP+XOR reasoning systems and demonstrates performance that varies with the size of the independent support. 
Our proposed methods exploit the expressive power of ASP semantics and robustness of well-engineered ASP systems. 
Looking forward, our research will focus on counting projected minimal models. 
We also plan to explore the counting of {\em minimal correction subsets}, which are closely related to minimal models.

\bibliography{main}
\newpage
\appendix
\section*{Appendix}

\section{Deferred Proofs of~\Cref{sec:preliminaries}}
\paragraph{Proof of~\Cref{lemma:minimal_model_to_minimal_generator}}
\begin{proof}
`if' part proof: Let $I = \{a\textprime_1, \ldots, a\textprime_k\} \subseteq \items$ be a minimal generator of $D$ and $\cover{I, D} = \{t\textprime_1, \ldots, t\textprime_m\}$. 
We proof that $\sigma = \{p_{a\textprime_1}, \ldots, p_{a\textprime_k}, q_{t\textprime_1}, \ldots, q_{t\textprime_m}\}$ is a minimal model of $\mingen{D}$. By definition of $\mingen{D}$, $\sigma$ is a model of $\mingen{D}$. Now we show that $\sigma$ is a minimal model 
of $\mingen{D}$. We proof it by contradiction and assume that there is model $\sigma\textprime \models \mingen{D}$ and $\sigma\textprime$ is strictly smaller than $\sigma$.
As $I$ is a minimal generator, there is no $I\textprime \subset I$ such that $\cover{I, D} = \cover{I\textprime, D}$. Thus, there is at least one 
$q_{t\textprime_u} \in \sigma \setminus \sigma\textprime$, i.e., $t\textprime_u \not\in \cover{\sigma,D}$ and $t\textprime_u \in \cover{\sigma\textprime,D}$. By definition of $\mingen{D}$, 
$q_{t\textprime_u}$ occurs exactly in one clause of $\mingen{G}$ and let us denote the clause by notation $C_{t\textprime_u}$.
The literal $q_{t\textprime_u}$ is justified in minimal model $M$, thus $\forall a \in (\items \setminus I_{t\textprime_u}), p_a \not \in M$, which follows that 
the clause $C_{t\textprime_u}$ is not satisified by $\sigma\textprime$, which contradicts that $\sigma\textprime \models \mingen{D}$. 

`only if' part proof: Let $\sigma$ be a minimal model of $\mingen{D}$ and assume that $I_{\sigma} = \{a | p_a \in \sigma\}$ is not a minimal generator of $D$
i.e., there is another $I\textprime_{\sigma} \subset I_{\sigma}$ such that $\cover{I_{\sigma}, D} = \cover{I\textprime_{\sigma}, D}$.
By construction of $\mingen{D}$, $\cover{I_{\sigma}, D} \subseteq \{q_t | q_t \in \sigma\}$. As $\sigma \in \mmodel{\mingen{D}}$, 
$\forall q_t \in \sigma$ has a justification. Note that the (positive) literal $q_t$ occurs in exactly one clause $\mingen{D}$, 
which implies that $\forall a \in (\items \setminus I_t), p_a \not \in \sigma$. Thus, $\cover{I_{\sigma}, D} = \{q_t | q_t \in \sigma\}$, 
which follows that $\cover{I\textprime_{\sigma}, D} = \cover{I_{\sigma}, D} = \{q_t | q_t \in \sigma\}$. However, if $I\textprime_{\sigma} \subset I_{\sigma}$, 
then there are two possible cases: (i) either $\sigma\textprime = \{p_a | a \in I\textprime_{\sigma}\} \cup \{q_t | q_t \in \sigma\} \models \mingen{D}$, that 
contradicts that $\sigma$ is a minimal model of $\mingen{D}$, (ii) or $\sigma\textprime = \{p_a | a \in I\textprime_{\sigma}\} \cup \{q_t | q_t \in \sigma\} \not\models \mingen{D}$
that contradicts that $\cover{I_{\sigma}, D} = \cover{I\textprime_{\sigma}, D}$.
\end{proof}

\begin{lemma}
    $\sigma \in \mmodel{F}$ if and only if $\sigma \in \answer{\dlp{F}}$
\end{lemma}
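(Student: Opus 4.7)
The plan is to reduce this to the well-known fact that, for a positive disjunctive program (no default negation), answer sets coincide with minimal models. The transformation $\dlp{\cdot}$ produces exactly such a program, because every negative literal $\neg \ell_j$ of a clause is moved to the positive body as the atom $\ell_j$, so the resulting rule has an empty negative body $\body{r}^- = \emptyset$ and no $\mathsf{not}$ appears anywhere in $\dlp{F}$.

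The proof splits into two steps. Step one: classical models of $F$ coincide with ASP models of $\dlp{F}$. For a clause $C = \ell_0 \vee \ldots \vee \ell_k \vee \neg \ell_{k+1} \vee \ldots \vee \neg \ell_m$ and its rule $r: \ell_0 \vee \ldots \vee \ell_k \leftarrow \ell_{k+1}, \ldots, \ell_m$, I would unfold both satisfaction conditions side by side. The classical condition $M \models C$ says some $\ell_j \in M$ for $j \leq k$ or some $\ell_j \notin M$ for $k < j \leq m$. The ASP condition from the preliminaries, $M \models r$ iff $(\head{r} \cup \body{r}^-) \cap M \neq \emptyset$ or $\body{r}^+ \setminus M \neq \emptyset$, simplifies (since $\body{r}^- = \emptyset$) to: $\{\ell_0,\ldots,\ell_k\} \cap M \neq \emptyset$ or $\{\ell_{k+1},\ldots,\ell_m\} \setminus M \neq \emptyset$. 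These are literally the same disjunction, so $M \models F \iff M \models \dlp{F}$.

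Step two: since $\body{r}^- = \emptyset$ for every rule, the Gelfond--Lifschitz reduct is trivial, $(\dlp{F})^M = \dlp{F}$ for every interpretation $M$. Plugging this into the answer set definition, $M \in \answer{\dlp{F}}$ iff $M \models \dlp{F}$ and no $M' \subset M$ satisfies $\dlp{F}$. By step one, this rewrites to: $M \models F$ and no proper subset of $M$ is a model of $F$, which is exactly the definition of $M \in \mmodel{F}$, giving the desired biconditional.

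There is no real obstacle here; the argument is a direct unfolding of definitions. The only thing to be careful about is the bookkeeping of where each kind of literal in $C$ ends up in $r$, so that the clause-versus-rule satisfaction conditions line up cleanly, and to explicitly invoke $\body{r}^- = \emptyset$ when collapsing the GL-reduct, which is the single fact that makes the minimal-model and answer-set semantics coincide.
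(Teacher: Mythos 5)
Your proof is correct and follows essentially the same route as the paper's: both rest on the observations that $\dlp{F}$ contains no default negation, so the GL-reduct satisfies $\dlp{F}^M = \dlp{F}$, and that models of $F$ and of $\dlp{F}$ coincide. The only difference is presentational—you unfold the clause/rule satisfaction conditions explicitly and argue directly, whereas the paper asserts the model correspondence "by construction" and argues each direction by contradiction.
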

    \begin{proof}
    proof of ``if'' part: Proof idea: Proof by Contradication.\\
    Assume that $\sigma \in \answer{\dlp{F}}$ but $\sigma \not\in \mmodel{F}$.
    As $\sigma \in \answer{\dlp{F}}$, then $\sigma \models F$. Thus we only proof that there is no 
    model $\sigma\textprime \subset \sigma$ such that $\sigma\textprime \models F$.
    For purpose of the contradiction, assume that there is a model $\sigma\textprime \subset \sigma$ and 
    $\sigma\textprime \models F$. By consturction of program $\dlp{F}$, as $\sigma\textprime \models F$, $\sigma\textprime \models \dlp{F}$.
    Note that the reduct of $\dlp{F}$ w.r.t. $\sigma$ and $\sigma\textprime$ are same, more specifically, 
    $\dlp{F}^{\sigma} = \dlp{F}^{\sigma\textprime} = \dlp{F}$ because there is no default negation in 
    $\dlp{F}$. Thus, $\sigma\textprime \subset \sigma$ and $\sigma\textprime \models \dlp{F}^{\sigma}$, 
    which contradicts that $\sigma$ is an answer set of $\dlp{F}$.   
    
    proof of ``only if'' part: 
    The proof is trivial. If $\sigma \in \mmodel{F}$ and $\sigma \not\in \answer{\dlp{F}}$, then 
    there is another $\sigma\textprime \subset \sigma$ and $\sigma\textprime \models \dlp{F}$, which contradicts
    that $\sigma$ is a minimal model because $\sigma\textprime \models F$.   
\end{proof}

\section{Theoretical Analysis of \hybrid}
\paragraph{Analysis of \projenum: Proof of~\Cref{theorem:decomposition}}
\begin{proof}
The proof consists on the following two parts:
\begin{enumerate}
\item for each $\sigma \in \mmodel{F}$, there is a sequence of $\sigma_i$, $i \in [1,k]$ such that $\sigma_i \in \mmodel{F}_{\downarrow V_i}$
\item for each $\sigma_i \in \mmodel{F}_{\downarrow V_i}$ and $i \in [1,k]$, then $\bigcup_{i} \sigma_i \in \mmodel{F}$
\end{enumerate}
Proof of \textbf{1}: Proof of this part is trivial. For each $\sigma \in \mmodel{F}$, it follows that $\sigma_i = \sigma_{\downarrow V_i}$.
    
\noindent Proof of \textbf{2}: The variable set of each $\sigma_i$ is distinct or $\forall i \neq j, V_i \cap V_j = \emptyset$. For each $\sigma_i \in \mmodel{F}_{\downarrow V_i}$, there exists a $\sigma \in \mmodel{F}$, where $\sigma_i = \sigma_{\downarrow V_i}$. The $\bigcup_{i} \sigma_i$ assigns the variable of $V$ and the union of $\sigma_i$ or $\sigma = \bigcup_{i} \sigma_i$ is a minimal model of $F$
\end{proof}
The correctness of~\Cref{alg:projected_enumeration} can be established as follows:
\begin{lemma}
    \label{lemma:proof_of_enumproj}
    For Boolean formula $F$ and a cut $\cut$, the minimal models of $F$ can be computed as follows:
        $\mmodel{F} = \bigcup_{\tau \in 2^{\cut}} \projenu{F, \tau, \var{F}}$.
\end{lemma}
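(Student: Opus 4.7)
The plan is to prove the equality by establishing both set inclusions, observing that the statement is essentially a partitioning argument: each minimal model is uniquely captured by conditioning on its own projection to the cut.

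First, I would unfold the definitions on the right-hand side. Since $\projenu{F, \tau, X} = \{\sigma_{\downarrow X} \mid \sigma \in \mmodel{F},\, \sigma \models \tau\}$ and here $X = \var{F}$, the projection $\sigma_{\downarrow \var{F}}$ is simply $\sigma$. Thus
\[
\bigcup_{\tau \in 2^{\cut}} \projenu{F, \tau, \var{F}} \;=\; \bigcup_{\tau \in 2^{\cut}} \{\sigma \in \mmodel{F} \mid \sigma \models \tau\}.
\]

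For the $\supseteq$ direction, every element on the right-hand side is by construction an element of $\mmodel{F}$, so the inclusion is immediate.

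For the $\subseteq$ direction, take any $\sigma \in \mmodel{F}$ and set $\tau := \sigma_{\downarrow \cut}$, which is a well-defined assignment in $2^{\cut}$ since $\cut \subseteq \var{F}$. By definition of the satisfaction relation between assignments, $\sigma \models \tau$ (because ${\sigma}_{\downarrow \var{\tau}} = \sigma_{\downarrow \cut} = \tau$). Therefore $\sigma$ appears in $\projenu{F, \tau, \var{F}}$, which completes the inclusion.

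There is no real technical obstacle here: the identity is almost a tautology once the projection semantics of $\projenu{F, \tau, X}$ and the assignment-satisfaction relation $\sigma \models \tau$ are unfolded. The only point worth writing carefully is the verification that the self-chosen witness $\tau = \sigma_{\downarrow \cut}$ satisfies $\sigma \models \tau$, which is exactly the definition given earlier in the preliminaries. Consequently, this lemma serves mainly as the formal justification that iterating over conditioning assignments $\tau$ drawn from projections onto $\cut$ (as \Cref{alg:projected_enumeration} does) cannot miss any minimal model of $F$.
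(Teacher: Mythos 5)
Your proof is correct, and it takes a more elementary route than the paper. You prove the stated set equality directly: unfold $\projenu{F,\tau,\var{F}}$, observe that projection onto $\var{F}$ is the identity on minimal models, get the $\supseteq$ inclusion for free, and for $\subseteq$ exhibit the witness $\tau=\sigma_{\downarrow\cut}$ and check $\sigma\models\tau$ from the definition of assignment satisfaction. This is a rigorous, self-contained verification of the lemma exactly as written, and your remark that it is nearly tautological once the definitions are unfolded is accurate. The paper instead argues at the level of the algorithm: it invokes the decomposition result (\Cref{theorem:decomposition}) and asserts that iterating over all $\tau\in 2^{\cut}$ covers every minimal model, concluding that \projenum{} is correct --- an argument that is really aimed at justifying \Cref{alg:projected_enumeration} (including the per-component projected enumeration and multiplication), and is correspondingly more informal about the literal set identity. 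What your approach buys is precision for the statement itself; what it does not address (and does not need to, for this lemma) is the part the paper's sketch gestures at: that the union over distinct $\tau\in 2^{\cut}$ is disjoint (two different assignments to $\cut$ cannot both be satisfied by the same $\sigma$), and that within each $\tau$ the count factors over the components of $F|_{\tau^{\star}}$ via \Cref{theorem:decomposition}. Those two facts, not the lemma alone, are what make the counting in \Cref{alg:projected_enumeration} sound, so if you intend your proof to replace the paper's, it is worth adding one sentence noting the disjointness of the union.
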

\begin{proof}
    ~\Cref{theorem:decomposition} demonstrates that minimal models can be computed through component decompositions. 
    By taking the union over $\tau \in 2^{\cut}$, we iterate over all possible assignments of $\cut$. Consequently, \projenum 
    algorithm computes all minimal models of $F$ by conditioning over all possible assignments over $\cut$. 
    Therefore, the algorithm is correct. 
\end{proof}

\paragraph{Analysis of \hashcounter.}
We adopt the following notation: $\sstar = \log_{2}{\Card{\mmodel{F}}}$. 
Each minimal model $\sigma$ of $F$ is an assignment over $\var{F}$, 
and according to the definition of random XOR constraint~\citep{GSS2006a}, $\sigma$ satisifies a random XOR constraint with probability of $\sfrac{1}{2}$.
Due to uniformity and randomness of XOR constraints, each minimal model of $F$ satisfies $m$ random and uniform XOR constraints with probability of $\sfrac{1}{2^m}$.
In our theoretical analysis, we apply the Markov inequality: if $Y$ is a non-negative random variable, then $\pr{Y \geq a} \leq \frac{\mathbb{E}[Y]}{a}$, where $a > 0$.
\begin{lemma}
    \label{lemma:prob_of_each_s}
    For arbitrary $s$, $\pr{\Card{\mmodel{F^s} \geq 1}} \leq \frac{2^{\sstar}}{2^s}$.
\end{lemma}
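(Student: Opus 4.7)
The plan is to combine linearity of expectation with the Markov inequality explicitly mentioned in the lead-in paragraph. First, I would let $Y = \Card{\mmodel{F^s}}$ be the random variable counting how many minimal models of $F$ satisfy the first $s$ random XOR constraints $Q^1, \ldots, Q^s$, and observe that $Y$ is non-negative and integer-valued, so $\pr{\Card{\mmodel{F^s}} \geq 1} = \pr{Y \geq 1}$. This reduces the task to bounding $\mathbb{E}[Y]$ and then invoking Markov.

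Next, I would expand $Y$ as a sum of indicators. For each $\sigma \in \mmodel{F}$, let $X_\sigma$ be the indicator of the event that $\sigma$ satisfies all of $Q^1, \ldots, Q^s$; then $Y = \sum_{\sigma \in \mmodel{F}} X_\sigma$. The statement from the paragraph above the lemma gives $\pr{X_\sigma = 1} = \sfrac{1}{2^s}$, since each $\sigma$ satisfies each random XOR constraint with probability $\sfrac{1}{2}$, and the $s$ constraints are drawn independently. By linearity of expectation,
\begin{equation*}
\mathbb{E}[Y] \;=\; \sum_{\sigma \in \mmodel{F}} \pr{X_\sigma = 1} \;=\; \frac{\Card{\mmodel{F}}}{2^s} \;=\; \frac{2^{\sstar}}{2^s},
\end{equation*}
where the last equality uses the definition $\sstar = \log_2 \Card{\mmodel{F}}$.

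Finally, I would apply Markov's inequality with $a = 1$ to the non-negative random variable $Y$, yielding
\begin{equation*}
\pr{\Card{\mmodel{F^s}} \geq 1} \;=\; \pr{Y \geq 1} \;\leq\; \frac{\mathbb{E}[Y]}{1} \;=\; \frac{2^{\sstar}}{2^s},
\end{equation*}
which is the claimed bound. I do not foresee a real obstacle here; the only subtlety worth being explicit about is that the probability $\sfrac{1}{2^s}$ for a fixed $\sigma$ relies on the independence of the $s$ constraints (not on independence across distinct $\sigma$'s, which would in general fail but is not needed since linearity of expectation requires no independence). If anything, the mildly delicate step is justifying the per-constraint probability $\sfrac{1}{2}$ for a fixed assignment, which is a standard property of uniformly random affine XOR constraints over Bernoulli$(\sfrac{1}{2})$ coefficients and parity bit, and I would cite the same reference as the paragraph preceding the lemma.
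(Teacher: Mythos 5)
Your proof is correct and follows essentially the same route as the paper's: indicator variables for each minimal model, linearity of expectation to get $\mathbb{E}[Y] = \sfrac{2^{\sstar}}{2^s}$, and Markov's inequality with $a=1$. Your explicit remark that independence across distinct minimal models is not needed is a nice clarification the paper leaves implicit, but the argument is the same.
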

\begin{proof}
    For each $\sigma \in \mmodel{F}$, we define a random variable $Y_\sigma \in \{0,1\}$ and $Y_\sigma = 1$ indicates that $\sigma$ satisifies the first $s$ XOR constraints $Q^1, \ldots, Q^{s}$, otherwise, $Y_\sigma = 0$.
    The random variable $Y$ is the summation, $Y = \sum_{\sigma \in \mmodel{F}} Y_{\sigma}$. 
    The expected value of $Y$ can be calculated as $\mathbb{E}(Y) = \sum_{\sigma \in \mmodel{F}} \mathbb{E}(Y_\sigma)$.

    Due to the nature of random and uniform XOR constraints, each minimal model $\sigma \in \mmodel{F}$ satisfies all $s$ XOR constraints with probability $\frac{1}{2^s}$.
    It follows that the expected value $\mathbb{E}(Y_\sigma) = \frac{1}{2^s}$, and the expected value of $Y$ is $\mathbb{E}(Y) = \frac{\Card{\mmodel{F}}}{2^s} = \frac{2^{\sstar}}{2^s}$. According to the Markov inequality:
    \begin{align*}
        \pr{\Card{\mmodel{F^s} \geq 1}} &\leq \frac{2^{\sstar}}{2^s}
    \end{align*}
\end{proof}
\begin{lemma}
    \label{lemma:guarantee_minimum}
    Given a formula $F$ and confidence $\delta$, if $\approxfunction(F, \delta)$ returns $2^{s - \alpha}$, then $\pr{2^{s - \alpha} \leq \Card{\mmodel{F}}} \geq 1 - \delta$ 
\end{lemma}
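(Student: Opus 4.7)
The plan is to reduce the claim to a union bound over Markov-style bounds supplied by the preceding lemma $\pr{|\mmodel{F^s}| \geq 1} \leq 2^{s^\star - s}$, where $s^\star = \log_2|\mmodel{F}|$. The key observation is that \approxfunction only ever returns $2^{s-\alpha}$ for a value $s$ at which it has actually witnessed a minimal model of $F^s$: in the non-timeout branch, $\lst$ is set only after the check $\exists \sigma \in \mmodel{F^m}$ succeeds, and in the timeout branch $\lst$ is set to $\lastMMFound$, which (by the update rule) equals the largest $m$ for which the same existence check succeeded. So the only way the returned value can violate the desired bound is if the event $\mathcal{E}_s := \{|\mmodel{F^s}| \geq 1\}$ occurs for some integer $s$ with $s > s^\star + \alpha$.

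First I would fix notation: $\alpha = -\log_2 \delta + 1$, so $2^{-\alpha} = \delta / 2$. The failure event is $\{2^{s-\alpha} > |\mmodel{F}|\} = \{s > s^\star + \alpha\}$; together with the fact that $s$ corresponds to a witnessed non-empty cell, the overall bad event is contained in $\bigcup_{s > s^\star + \alpha} \mathcal{E}_s$, where $s$ ranges over positive integers bounded by $\Card{\support} - 1$.

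Next I would apply \Cref{lemma:prob_of_each_s} to each term and sum. Letting $s_0$ be the smallest integer strictly greater than $s^\star + \alpha$, a geometric series calculation gives
\[
\sum_{s \geq s_0} \pr{\mathcal{E}_s} \;\leq\; \sum_{s \geq s_0} 2^{s^\star - s} \;=\; 2^{s^\star - s_0 + 1} \;<\; 2^{-\alpha + 1} \;=\; \delta.
\]
Combining this with the union bound yields $\pr{2^{s-\alpha} > |\mmodel{F}|} < \delta$, i.e., $\pr{2^{s-\alpha} \leq |\mmodel{F}|} \geq 1 - \delta$, which is the claimed guarantee.

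The main obstacle is not the probabilistic calculation itself, which is a standard Gomes--Sabharwal--Selman style argument, but rather the bookkeeping step of arguing that the algorithm's somewhat intricate control flow (binary search with doubling, timeout fallback through $\lastMMFound$) can never return an $s$ that was not personally certified by a successful $\exists \sigma \in \mmodel{F^s}$ check. Once that invariant is in hand, the remaining steps are mechanical, and the choice $\alpha = -\log_2 \delta + 1$ is tuned precisely so that the geometric sum collapses to $\delta$.
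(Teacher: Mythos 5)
Your proposal is correct and follows essentially the same route as the paper's proof: bound the error event $\{s > \sstar + \alpha\}$ by a union bound over the events $\{\Card{\mmodel{F^s}} \geq 1\}$, apply \Cref{lemma:prob_of_each_s} termwise, and collapse the geometric series to $2^{1-\alpha} = \delta$. Your explicit check that the returned $s$ (including the timeout fallback via $\lastMMFound$) was always certified by a successful $\exists \sigma \in \mmodel{F^s}$ test is exactly the role played by the events $I_s \cap E_s$ in the paper's argument, just spelled out in slightly more detail.
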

\begin{proof}
    Given a input Boolean formula $F$ and for each $m \in [1, \Card{\support} - 1]$, we denote the following two events: $I_m$ denotes the event that \Cref{alg:lower_bound_on_minimal_models_counting} invokes $\mmodel{F^m}$ and 
    $E_m$ denotes the event that $\Card{\mmodel{F^m}} \geq 1$. The algorithm $\approxfunction(F, \delta)$ returns an incorrect bound when $s - \alpha > \sstar$ and let use the notation $\error$ to denote 
    that $\approxfunction(F, \delta)$ returns an incorrect bound or $\approxfunction(F, \delta) > 2^{\sstar}$. The upper bound of $\pr{\error}$ can be calculated as follows:
    \begin{align*}
        \pr{\error} &= \pr{\text{$\approxfunction(F, \delta)$ returns $2^{s - \alpha}$ and } s > \sstar + \alpha}\\
        &\leq \sum_{s > \sstar + \alpha} \pr{I_s \cap E_s} \leq \sum_{s > \sstar + \alpha} \pr{E_s}\\
        &\leq \sum_{s > \sstar + \alpha} \pr{\Card{\mmodel{F^s}} \geq 1} \leq \sum_{s > \sstar + \alpha} \frac{2^{\sstar}}{2^s} &\qquad & \text{According to~\Cref{lemma:prob_of_each_s}}\\
        &\leq \frac{1}{2^{\alpha}} \times 2 \leq 2^{1-\alpha} \leq 2^{\log_2{\delta}} \leq \delta
    \end{align*}
    Thus, $\pr{\text{$\approxfunction(F, \delta)$ returns $2^{s - \alpha}$ and } s \leq \sstar + \alpha} \geq 1 - \delta$.
\end{proof}
\paragraph{Analysis of \Cref{thm:main_theorem}}
\begin{proof}
    The proof consists of two cases. 
    
    \noindent (i) When \hybrid calls \projenum (if $\Card{\cut} \leq 50$). In the case, the proof follows~\Cref{lemma:proof_of_enumproj}. 
    
    \noindent (ii) \hybrid calls \hashcounter. In the case, the proof follows~\Cref{lemma:guarantee_minimum}. 
\end{proof}

\end{document}